\def\ps@headings{%
\def\@oddhead{\mbox{}\scriptsize\rightmark \hfil \thepage}%
\def\@evenhead{\scriptsize\thepage \hfil \leftmark\mbox{}}%
\def\@oddfoot{}%
\def\@evenfoot{}}
\newtheorem{theorem}{Theorem}
\newtheorem{lemma}{Lemma}
\definecolor{mycolor}{rgb}{0.5,0,1}
\begin{document}

\title{On the Feasibility of Sybil Attacks in \\Shard-Based Permissionless Blockchains}

\author{Tayebeh Rajabi$^{\dag}$, Mohammad Hossein Manshaei$^{\ddag \dag }$, Mohammad Dakhilalian$^{\dag}$, \\ Murtuza Jadliwala$^\diamond$, and Mohammad Ashiqur Rahman$^\ddag$\\ 
$^\dag$Department of Electrical and Computer Engineering, Isfahan University of Technology, Iran\\
$^\diamond$Department of Computer Science, University of Texas at San Antonio, USA\\
$^\ddag$Department of Electrical and Computer Engineering, Florida International University, USA\\
Emails: t.rajabi@ec.iut.ac.ir, \{manshaei,mdalian\}@iut.ac.ir, murtuza.jadliwala@utsa.edu, marahman@fiu.edu.}

\maketitle

\begin{abstract}
Bitcoin's single leader consensus protocol (Nakamoto consensus) suffers from significant transaction throughput and network scalability issues due to the computational requirements of it Proof-of-Work (PoW) based leader selection strategy. To overcome this, committee-based approaches (e.g., \emph{Elastico}) that partition the outstanding transaction set into shards and (randomly) select multiple committees to process these transactions in parallel have been proposed and have become very popular. However, by design these committee or shard-based blockchain solutions are easily vulnerable to the \emph{Sybil attacks}, where an adversary can easily compromise/manipulate the consensus protocol if it has enough computational power to generate multiple Sybil committee members (by generating multiple valid node identifiers). 
Despite the straightforward nature of these attacks, they have not been systematically analyzed. In this paper, we fill this research gap by modelling and analyzing Sybil attacks in a representative and popular shard-based protocol called \emph{Elastico}. We show that the PoW technique used for identifier or ID generation in the initial phase of the protocol is vulnerable to Sybil attacks, and a node with high hash-power can generate enough Sybil IDs to successfully compromise Elastico. 
We analytically derive conditions for two different categories of Sybil attacks and perform numerical simulations to validate our theoretical results under different network and protocol parameters. 
\end{abstract}

\begin{IEEEkeywords}
Shard-based Blockchain, Sybil Attack, Elastico.
\end{IEEEkeywords}
\section{Introduction}

A blockchain is an append-only and immutable distributed database or ledger that records a time-sequenced history of transactions. One key aspect of the blockchain protocol is the consensus algorithm which enables agreement among a network of \emph{nodes} or \emph{processors} on the current state of the ledger, assuming that some of them could be malicious or faulty. Blockchain protocols are classified as permissioned or permissionless depending on whether a trusted infrastructure to establish verifiable identities for nodes is present or not. One of the first, and still popular, permissionless blockchain protocol is Bitcoin~\cite{nakamoto2008bitcoin}. Consensus in Bitcoin is achieved by selecting a leader node in an unbiased fashion once every 10 minutes (an epoch) on an average, who then gets the right to append a new block onto the blockchain. The network then implicitly accept this block by building on top of it in the following epoch or reject it by building on top of some other block. Bitcoin uses a Proof-of-Work (PoW) mechanism to select the leader in each epoch. In Bitcoin's PoW, each node independently attempts to solve a hash puzzle and the one that succeeds is selected as a leader who gets the right to propose the next block. As PoW involves significant computation, Bitcoin's protocol includes a reward mechanism for the winning node in order to incentivize everyone to behave honestly. Ever since the introduction of Bitcoin in 2008, the power of permissionless blockchain technology has been harnessed to create systems that can host and execute distributed contracts (also referred to as ``smart contracts") which have found many interesting applications, including, cryptocurrencies, secure data sharing and digital copyright management to name a few~\cite{kiayias2016blockchain}.

A significant shortcoming of Bitcoin's (and of other similar permissionless system's) leader and PoW competition based consensus protocol is its low transaction throughput and poor network scalability. For instance, Bitcoin's and Ethereum's transaction rates are only 7 and 20 transactions per second, respectively, which is clearly not sufficient for practical applications \cite{bitcoinscalability}. Although there have been several efforts towards improving the Bitcoin protocol itself, for instance, BIP \cite{bip102} and Bitcoin-NG \cite{eyal2016bitcoin}, other works have focused towards developing alternate high throughput and scalable permissionless blockchain protocols. One key outcome in this line of research is committee or shard-based protocols ~\cite{luu2016secure,kokoris2018omniledger,secure2018zilliqa} that operate by periodically partitioning the network of nodes into smaller non-overlapping committees, each of which processes a disjoint set of transactions (also called, a shard) in parallel with other committees. As each committee is reasonably small, the transaction throughput and scalability can be significantly improved by running a classical Byzantine consensus protocol such as PBFT \cite{castro1999practical} within each committee (and in parallel across all committees) rather than the traditional PoW based competition as used in Bitcoin. The idea of parallelizing the tasks of transaction processing and consensus by partitioning the processor network into committees is very promising and is already seeing deployment \cite{ethereumupgrade}.

\begin{figure*}[t]
\centering
\includegraphics[scale= 0.29]{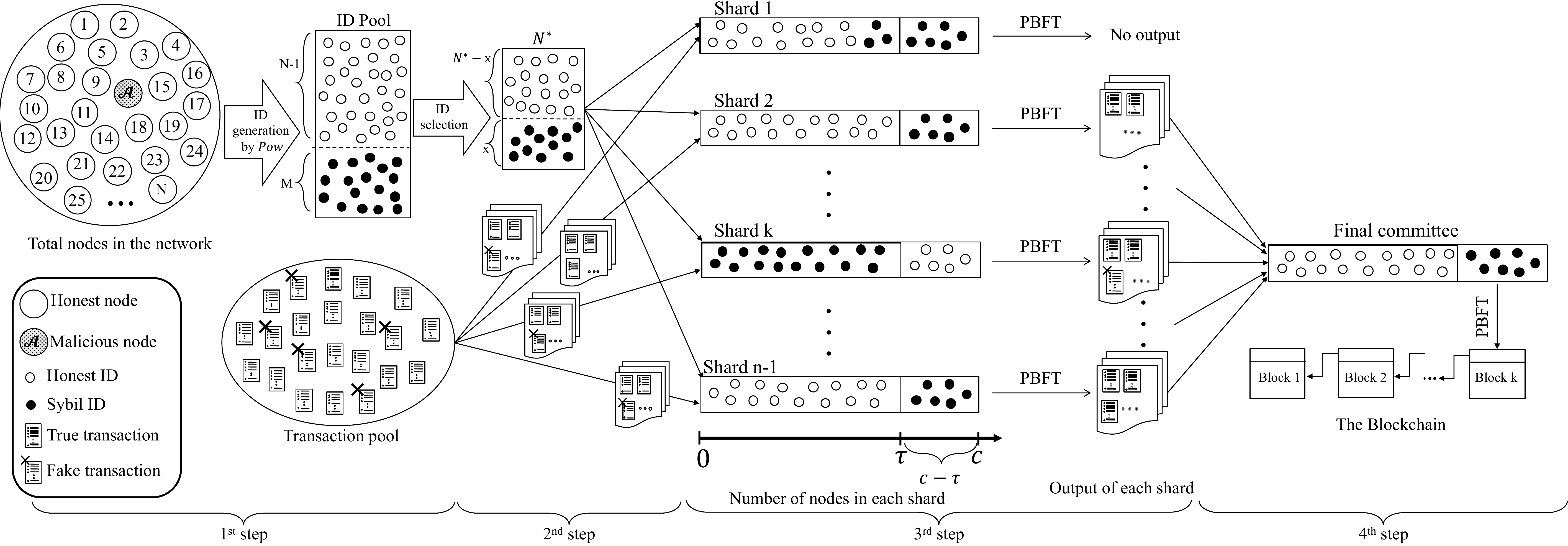}
\caption{{\small A shard-based blockchain system model. \emph{Step 1:} All nodes generate their IDs using a PoW defined mechanism. \emph{Step 2:}  IDs and transactions will be distributed among shards, given the generated IDs. There exist $2^{s}$ shards. Each shard accepts up to $c$ IDs.  \emph{Step 3:} Each shard runs a PBFT mechanism to validate the assigned transaction sets. At least $\tau$ IDs have to agree on the output of each shard. \emph{Step 4:} Final committee  merges the agreed values of shards and create a final block, that will be added to blockchain. In this example, the attacker breaks the consensus protocol in Shard $1$ (i.e., BCP attack) and inserts fake transactions in the output of Shard $k$ (i.e., GFT attack). }}
\label{fig:SysModel}
\end{figure*}

%
Committee participation in popular shard-based protocols such as Elastico~\cite{luu2016secure} happen by nodes generating verifiable IDs using some pre-defined PoW puzzle at the beginning of each epoch - only nodes that are able to generate valid IDs are able to participate in the consensus process in that epoch. However, it should be easy to see that the ID generation process easily lands itself to a Sybil attack. An adversary (valid node in the network) can leverage its computational or hash-power to generate a large number of Sybil IDs and increase its participation in shards. The adversary, by means of the generated Sybil IDs, can compromise the intra-committee consensus protocol to either prevent new transaction blocks from being added or to add fake transactions to the blockchain. Existing shard-based protocols assume that all nodes have the same hash-power, and thus the PoW based ID generation mechanism with an appropriate difficulty is not prone to such Sybil attacks. 
This assumption about the uniformity of hash-power across nodes generally does not hold, making such Sybil attacks feasible. Despite this, there has been very little effort within the research community to analytically study and combat the threat of Sybil attacks in shard-based blockchains. In this paper, we attempt to address this research gap.

%

Specifically, we investigate two different types of Sybil attacks in a representative shard-based blockchain system such as Elastico: (i) Break Consensus Protocol (or BCP) attack where the goal of the adversary is to thwart the consensus process 
and (ii) Generate Fake Transaction (or GFT) attack where the goal of the adversary is to introduce fake or invalid transactions into the blockchain. By assuming a reasonable network and adversary model, we first derive bounds on the success probability of these attacks and then theoretically analyze the necessary conditions to achieve success in these attacks. We further validate our analytical results by means of numerical simulations for a variety of system and network parameters. These analytical results and simulations shed further light on the computational or hash-power requirement for an adversary to compromise the consensus protocol of shard-based protocols and the choice of system and network parameters that can significantly reduce the probability of such attacks. The remainder of this paper is structured as follows.

 In Section~\ref{sec:SYSModel}, we introduce our system model for a Sybil attack to shard-based blockchain. In section~\ref{sec:analytical}, we present our analytical results for the probability of a successful Sybil attack. Section~\ref{sec:numerical} presents the simulation results, following by related work and conclusions in Section~\ref{sec:ralated work} and Section~\ref{sec:conclusion}.

\section{System Model}
\label{sec:SYSModel}

In this section, we first briefly outline the operational details of a shard-based permissionless blockchain system such as Elastico and then discuss the details of Sybil attacks that may be possible in such a network.

\subsection{Shard-Based Blockchain Models}
The operation of Elastico, and most other shard-based permissionless blockchain networks, can be fully described by means of four sequential steps that needs to be executed in each time period or epoch, as outlined in Figure~\ref{fig:SysModel}.
We assume that the blockchain network comprises of $N$ nodes (or processors) with different computational capabilities or hash-power. This is in contrast with \cite{kokoris2018omniledger} and \cite{luu2016secure}, where it is assumed that all nodes have the same power. The hash-power, which specifies the number of hash computations that can be performed by a node per second, is denoted by $h$. To elucidate the presentation, Table~\ref{vartable} summarizes the symbols (and their meanings) used throughout the paper.

%
\begin{table}[t]
\caption{List of Notations}
\vspace{-6pt}
\begin{center}
\rowcolors{2}{gray!20}{white}
\begin{tabular}{l |p{6.5cm}}
\hline
\textbf{Symbol} &\textbf{Definition} \\
\hline
\hline
$N$ & Number of nodes in the network \\
$N^{*}$ & Total required IDs at each epoch ($N^{*} <N$)  \\
$M$ &  Number of IDs generated by the adversary \\
$t_i$ & The time of epoch $i$\\
$L$ & The length of output of secure hash function (bit)\\
$L^{t_i}$ & The length of target value (bit) at epoch $t_i$\\
$c$ & Capacity of each shard \\
$h$ &  Hash-power of each processor\\
$s$ & Represent the number of shards ($2^{s}$) \\
$T_I$ & Initialization time needed for ID generations  \\
$\tau$  &  Consensus threshold \\ 
$x$ & Number of chosen adversary's IDs (Random Variable)  \\
$Dif(t_i)$ & Difficulty of solving PoW puzzle at epoch $t_i$ \\
$p(t_i)$ & The probability of finding a correct ID at epoch $t_i$\\
$P\{x=n\}$ & The probability of selecting $n$ adversary's ID\\
$P_{c-\tau +1}$ & The probability of having at least $c-\tau +1$ adversary's IDs in one shard, if $n \in (c-\tau ,c]$ \\
$P'_{c-\tau +1}$ & The probability of having at least $c-\tau +1$ adversary's IDs in one shard, if $n \in (c,2^{s}(c-\tau)]$ \\
$P''_{c-\tau +1}$ & The probability of having at least $c-\tau +1$ adversary's IDs in one shard, if $n > 2^{s}(c-\tau)$ \\
$P_{\tau}$ &The probability of having at least $\tau$ adversary's IDs in one shard, if $n \in [\tau ,c]$    \\
$P'_{\tau}$ & The probability of having at least $\tau$  adversary's IDs in one shard, if $n \in (c,2^{s}(\tau -1)]$  \\
$P''_\tau$ & The probability of having at least $\tau$ adversary's IDs in one shard, if $n > 2^{s}(\tau -1)$ \\
$P_B$ & The probability of successful BCP attack\\
$P_G$ & The probability of successful GFT attack\\
\hline
\end{tabular}
\end{center}
\label{vartable}
\vspace{-6pt}
\end{table}

Similar to any permissionless system, nodes in Elastico do not have any predefined identity assigned by a trusted third-party. In the first step of Elastico, as shown in Figure\ref{fig:SysModel}, each node attempts to generate a verifiable and pseudorandom ID which will enable it to participate in the rest of the steps in that time period or epoch. The nodes use the solution of a Proof-of-Work (PoW) puzzle with a network-determined difficulty to decide if they have arrived at a valid ID, as described next. Let $Hash()$ be the hash function (Elastico employs $SHA-256$) employed by a node in the blockchain network and let $IP$ and $PK$ denote a node's network address and public key, respectively. 
A publicly-known pseudo-random string \emph{epochRandomness}, generated at the end of the previous epoch of the protocol (to avoid puzzle pre-computation), is used as a seed for the PoW puzzle.
Each node in the first step of the protocol attempts to solve the PoW puzzle by finding a nonce such that $Hash(epochRandomness||IP||PK||nonce)$ is smaller than some network-determined \emph{target} value. The target value which determines the difficulty of the PoW puzzle is adapted by the network during each epoch based on the network-wide hash-power. Let's denote the target by $L^{t_i}$ bits, where $L$ is the size of the message digest (in terms of bits) and $t_i$ is corresponding the time epoch. In other words, a valid ID value during epoch $t_i$ must be smaller than $2^{L^{t_i}}$ and a node successful in generating such a valid ID assumes it as its own ID during the later steps of the the protocol. It should be noted that all nodes must generate their ID during a given \emph{initialization time} $T_I$, defined by the protocol. If they cannot solve the puzzle within this time, they will not possess a valid ID to join the rest of the network and participate in the protocol. It should be clear that nodes with a higher hash-power have a higher probability of solving the ID generation PoW puzzle and thus participating in the protocol, compared to nodes with lower hash-power.

After ID generation, in step 2 the generated node IDs and distributed transactions (which may contain both valid and fake/invalid transactions) are randomly distributed (or partitioned) into different \emph{shards} or committees for validation. In Elastico, there exist a total of $2^{s}$ shards, where $s$ is a network-defined parameter. Each node will be placed in a shard according to the last $s$ bits of its ID. If the \emph{capacity} of each shard is denoted by $c$, then it is clear that the minimum number of valid IDs required to execute a single epoch (all steps) of a shard-based blockchain protocol such as Elastico is $N^* = 2^{s}\times c$. The processors discover IDs of other processors in their shard by communicating with each other. 

In the third step, processors of each shard simultaneously validate the transaction set assigned to that shard and agree on a consensus transaction set (within that shard) using the PBFT \cite{castro1999practical} and \cite{buchman2016tendermint} algorithm.
Let $\tau$ denote the consensus threshold for each shard, i.e., if at least $\tau$ processors in a shard agree on a transaction set, then consensus within the shard is successful and the consensus transaction set is added to the shard's output. In other words, the consensus protocol within a shard successfully outputs a valid consensus transaction set even if $c - \tau$ nodes within the shard do not cooperate and/or behave maliciously.
In Figure~\ref{fig:SysModel}, we see this case in Shards $2$ and $n-1$. Post the intra-shard consensus, the leader node within each shard sends the signed value of the consensus transaction set, along with the signatures of the contributing shard nodes, to a final consensus committee (step 4). The final committee of processors, chosen based on their ID, merges the consensus transaction sets from each shard to create a final block which is eventually appended to the blockchain. 
Each final committee member first validates that the values (consensus transaction sets) received from each shard is signed by at least $c/2 + 1$ members of the shard, and then computes the ordered set union of all inputs. Finally, nodes of final committee execute PBFT to determine the consensus final block, which is signed and gets appended as the next block to the blockchain. 

\subsection{Threat Model }
\label{subsec:threatmodel}

We assume that the adversary or attacker ($\mathcal{A}$) in this setup has enough hash-power to generate more than one IDs (during step 1, as outlined above) and attempts to launch a \emph{Sybil attack} to disrupt the operation of the shard-based blockchain protocol. Although we formally model adversarial capabilities assuming a shard-based blockchain protocol such as Elastico, the model (and the results) can easily be generalized to other shard-based protocols. Before outlining the specific Sybil attacks that could be carried out by the adversary, let us first characterize the difficulty of an adversary in generating Sybil nodes. 
As outlined earlier, each node or processor uses the solution of a PoW puzzle as an ID such that in epoch $t_i$ the selected ID must be smaller than some network-agreed target value $2^{L^{t_i}}$. Let $MaxTarget$ denote the maximum possible value of the target. As $MaxTarget$ is observed in the first epoch, $MaxTarget = 2^{L^{t_1}}$ (e.g., in Bitcoin $MaxTarget = 2^{224}$ \cite{o2014bitcoin}). Given $MaxTarget$, we define the \emph{difficulty} of solving a PoW puzzle in epoch $t_i$ as:
\begin{equation}\label{eq:Diff}
Dif(t_i)=\frac{MaxTarget}{2^{L^{t_i}}}.
\end{equation}
It is easy to see that, as the target value $2^{L^{t_i}}$ during a particular time epoch $t_i$ reduces, the PoW puzzle for ID computation becomes harder to solve for the nodes or processors, which is indicated by a higher difficulty value $Dif(t_i)$. 
Now, the probability of finding a valid ID during epoch $t_i$ is given by:
\begin{equation}\label{eq:pti}
p(t_i)=\frac{2^{L^{t_i}}}{2^{L}}. 
\end{equation}
where $2^{L}$ denotes the message digest space of the hash function $Hash()$ using the PoW puzzle. Given Equation~(\ref{eq:Diff}), $p(t_i)$ can be rewritten as:
\begin{equation}\label{pID1}
p(t_i)=\frac{\frac{2^{L^{t_1}}}{Dif(t_i)}}{2^{L}}=\frac{2^{L^{t_1}}}{Dif(t_i)\times 2^L}.
\end{equation} 
Equation~(\ref{pID1}) above represents the formal relationship between the difficulty of finding a valid PoW puzzle solution (i.e., ID in this case) and the success probability of solving the puzzle during epoch $t_i$.
Now, let us assume that the adversary $\mathcal{A}$'s hash-power (or hash computation capability) is $h^{\mathcal{A}}$. In other words, during the initialization period $T_I$ (step 1) of an epoch $t_i$, $\mathcal{A}$ can generate a maximum of $h^{\mathcal{A}} \times {T_I}$ potential solutions (or message digests), of which only those that smaller than $2^{L^{t_i}}$ (target value during $t_i$) can be used as valid IDs.
Let's further assume that $\mathcal{A}$ can find $M$ valid solutions or IDs (i.e., those that satisfy the puzzle or are within the target value) during $T_I$. Thus, $M$ can be calculated by:
\begin{equation}\label{MID}
M =  p(t_i) \times {h^{\mathcal{A}} \times T_I}.
\end{equation} 
Substituting for $p(t_i)$ from Equation~(\ref{pID1}) we get:
\begin{equation}\label{MID2}
M = \frac{2^{L^{t_1}}}{Dif(t_i)\times 2^L} \times {h^{\mathcal{A}} \times T_I}.
\end{equation}
Now, a PoW puzzle based identity or ID generation mechanism is said to be \emph{strictly Sybil-resistant} if and only if the value of $M$ in the mechanism can be restricted to less than 2. In other words, for the above PoW puzzle based identity or ID generation mechanism in Elastico to be strictly Sybil-resistant, the solution difficulty $Dif(.)$ and initialization time $T_I$ should be such that for a given adversary hash-power $h^{\mathcal{A}}$, $L$, and $L^{t_1}$, $M$ is always smaller than 2, i.e.,  $\mathcal{A}$ should be able to generate at most one ID during $T_I$. In this paper, we assume that the shard-based blockchain protocol's PoW-based identity generation mechanism is not strictly Sybil-resistant and that the adversary's hash-power is large enough to generate two or more than IDs (i.e., $M \geq 2$) during the initialization phase (step 1) of any epoch. The adversary then employs these numerous valid IDs for placing itself in multiple shards in order to carry out different types Sybil attacks (as described below), the goal of which is to subvert the correct operation of the blockchain protocol. 
In contrast to the adversary, we assume that each honest node in the network generates only a single valid ID during $T_I$ in each epoch. Now, lets describe in further details the two different types of Sybil attacks that can be carried out by the adversary.

\noindent {\bf 1. Break Consensus Protocol (BCP) Attack:} In order to accomplish the BCP attack, the goal of which is to disrupt the shard-based consensus process, the adversary will need to generate more than $c-\tau$ valid IDs in a (target) shard. This threshold of valid IDs will enable the adversary to break the intra-shard consensus protocol in that shard, thereby preventing insertion of some transactions (specifically, from the target shard) into the blockchain. An instance of the BCP attack is depicted in Figure~\ref{fig:SysModel}, where the attacker successfully inserts more than $c-\tau$ Sybil IDs in Shard 1. 

\noindent {\bf 2.  Generate Fake Transaction (GFT) Attack:} In order to accomplish the GFT attack, the goal of which is to include fake (including, double spending or invalid) transactions in the blockchain blocks by taking over the consensus process, the adversary will need to add at least $\tau$ valid IDs in a (target) shard. By doing so, the adversary aims to control and manipulate the consensus process (i.e., the PBFT algorithm) using his Sybil IDs so that intra-shard consensus could be arrived on a desired set of fake transactions, which eventually get inserted into the blockchain block after final consensus. Figure~\ref{fig:SysModel} illustrates the GFT attack on shard $k$.

Our overarching goal is to determine bounds on the success probabilities, given different network and protocol parameters, of carrying out the BCP and GFT Sybil attacks described above.
%
In the following section, we present theoretical analysis outlining the computation of these probability bounds.

\section{Analytical results}
\label{sec:analytical}

The remaining operations (step 2 onward) of the shard-based blockchain protocol are initiated in each epoch only after receiving $N^*$ IDs from the ID pool generated during the initialization or identifier generation phase (step 1), as discussed earlier in Section \ref{sec:SYSModel}. Now, given the $M$ Sybil IDs (generated by the adversary) in the ID pool comprising of a total of $M+N-1$ IDs, our first task is analyze this ID selection. Let $x$ denote a random variable representing the number of Sybil IDs (generated by the adversary) chosen from the $M+N-1$ IDs generated during the initialization step (step 1). In other words, $x$ IDs belonging to the adversary while $N^*-x$ IDs belonging to the honest nodes are distributed among the various shards after the initialization period $T_I$ (step 1). The success probability of the various attacks described above is thus a function of the number of Sybil IDs that the adversary is able to generate and get distributed among the different shards in each epoch. From the discussion in Section~\ref{subsec:threatmodel}, it should be clear that if the number of Sybil IDs generated by an adversary is smaller than or equal to $c-\tau$, then the adversary cannot successfully execute the BCP or GFT attacks. 
Let $n$ denote the actual number of Sybil IDs chosen to be distributed among the various shards, i.e., $n \in \{ 1,2, \cdots, M \}$. Thus, we first need to calculate the probability $P\{x=n\}$ of selecting/choosing $n$ Sybil IDs or nodes from the entire pool of $M+N-1$ IDs in an epoch.
This is given by the following Lemma. 
\begin{lemma}
If $\mathcal{A}$ can generate $M$ Sybil IDs during $T_I$, then the probability of selecting $n$ Sybil IDs from the ID pool is 
\begin{equation}\nonumber
P\{x=n\}=\frac{{\binom Mn} {N-1 \choose N^*- n}}{{M+N-1 \choose N^*}}.
\end{equation}
\label{pr}
\end{lemma}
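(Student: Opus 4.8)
The plan is to recognize $P\{x=n\}$ as the probability mass function of a hypergeometric distribution and then justify this identification from the structure of the ID-selection process. The key observation is that the total ID pool consists of two distinguishable subpopulations: the $M$ Sybil IDs generated by $\mathcal{A}$ and the $N-1$ IDs generated one each by the honest nodes, for a combined population size of $M+N-1$. From this pool, exactly $N^*$ IDs are selected (without replacement, since every valid ID is a distinct value below the target) to populate the shards, and $x$ counts how many of the selected IDs belong to $\mathcal{A}$.

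First I would fix the probabilistic model underlying the selection. Because the protocol places IDs into shards according to pseudorandom hash outputs under a network-agreed target, and because no Sybil ID is distinguishable a priori from an honest one, I would argue that every size-$N^*$ subset of the $M+N-1$ IDs is equally likely to be the selected set. This uniformity assumption is the crucial modeling step, and it is where I expect the main difficulty to lie — conceptual rather than computational: one must argue that the pseudorandomness of the PoW solutions together with the identity-agnostic assignment into shards induces the uniform distribution over all $\binom{M+N-1}{N^*}$ possible selections. Everything downstream is then mechanical.

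Granting uniformity, the remainder is an elementary counting argument. I would count the favorable outcomes — those selections containing exactly $n$ Sybil IDs — by independently choosing which $n$ of the $M$ Sybil IDs are included, in $\binom{M}{n}$ ways, and which $N^*-n$ of the $N-1$ honest IDs complete the selection, in $\binom{N-1}{N^*-n}$ ways, yielding $\binom{M}{n}\binom{N-1}{N^*-n}$ favorable subsets. Dividing by the total number $\binom{M+N-1}{N^*}$ of equally likely selections gives the claimed expression. Finally, I would record the implicit support constraints $0 \le n \le M$ and $N^*-n \le N-1$, under which all three binomial coefficients are well-defined, thereby completing the identification with the hypergeometric law and establishing the Lemma.
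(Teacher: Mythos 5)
Your proposal is correct and matches the paper's approach: the paper simply asserts that $x$ follows a hypergeometric distribution over a population of $M$ Sybil and $N-1$ honest IDs with $N^*$ draws, and your counting argument is exactly the standard derivation of that law's probability mass function (with the uniformity of the selection made explicit, which the paper leaves implicit).
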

\begin{proof}
The proof of this lemma follows trivially from the fact that $x$ follows a hypergeometric distribution \cite{ross2014first}.
\end{proof}
The following subsections will be devoted to the computation of the adversary's success probability in executing BCP and GFT Sybil attacks. Recall that Table~\ref{vartable} summarizes the definition of main probability bounds. 
%

\subsection{Probability of Successful BCP Attack}
Recall that for successfully executing the BCP attack, $\mathcal{A}$ must have more than $c-\tau$ (Sybil) IDs in at least one shard. Hence, if $M \leq c - \tau$, $\mathcal{A}$ cannot launch BCP attacks, i.e., the probability of a successful BCP attack would be zero. Thus, to calculate the successful probability of a BCP attack, we first need to calculate the probability of having at least $c-\tau +1$ Sybil IDs in one shard, when $x=n$ Sybil IDs (generated by the adversary $\mathcal{A}$) have been chosen from the overall ID pool (at the end of the initialization step). The following lemma captures this probability.
\begin{lemma} 
In a shard-based blockchain protocol, when $n$ Sybil IDs (generated by the adversary $\mathcal{A}$) have been chosen from the ID pool after the initialization step, the probability of having at least $c-\tau +1$ Sybil IDs in one shard is: 
\begin{equation}\nonumber
P_{c-\tau +1}=\frac{2^{s}\sum\limits_{m=c-\tau +1}^{n}{n \choose m}{{N^*}-n \choose c-m}}{\binom {N^*}c}.
\end{equation}
where $c-\tau +1 \leq n$. 
\label{P(c-tau+1)}
\end{lemma}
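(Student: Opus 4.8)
The plan is to reduce the whole-network event to a single shard by symmetry, compute the per-shard probability exactly, and only then aggregate across the $2^{s}$ shards. Since a node is placed in a shard according to the last $s$ bits of its (pseudorandom) ID, the assignment of the $N^{*}$ selected IDs to shards is, for the purpose of counting, a uniformly random partition into $2^{s}$ groups of size $c$; in particular every shard is statistically identical. So I would first fix one arbitrary shard and determine the probability that it contains at least $c-\tau+1$ of the $n$ Sybil IDs, and afterwards account for the fact that the BCP attack succeeds if this happens in any of the $2^{s}$ shards.

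For a single fixed shard, I would model its composition as choosing which $c$ of the $N^{*}$ available IDs land in it. Out of the pool, $n$ IDs are Sybil and $N^{*}-n$ are honest, so the number of Sybil IDs falling in the fixed shard follows a hypergeometric law: the probability of seeing exactly $m$ Sybil IDs is $\binom{n}{m}\binom{N^{*}-n}{c-m}/\binom{N^{*}}{c}$. Summing this over the ``bad'' values $m=c-\tau+1,\dots,n$ gives the probability that the fixed shard is compromised. The upper summation limit $n$ (rather than $c$) is exactly what pins this lemma to the regime $n\le c$ declared for $P_{c-\tau+1}$: when $n\le c$ all $n$ Sybil IDs can physically fit in one shard, whereas for $n>c$ the sum must be truncated at $m=c$ (handled separately by $P'_{c-\tau+1}$ and $P''_{c-\tau+1}$). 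I would also verify the admissible ranges of $m$ (needing $c-m\le N^{*}-n$ and $m\le\min(n,c)=n$) so that every binomial coefficient is well defined.

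The remaining and most delicate step is the factor $2^{s}$. The cleanest route is to introduce, for each shard $i$, the indicator $X_{i}$ of the event ``shard $i$ holds at least $c-\tau+1$ Sybil IDs''; by the symmetry above each $\mathbb{E}[X_{i}]$ equals the per-shard probability just computed, so $\sum_{i}\mathbb{E}[X_{i}]=2^{s}\cdot\big(\sum_{m=c-\tau+1}^{n}\binom{n}{m}\binom{N^{*}-n}{c-m}\big)/\binom{N^{*}}{c}$, which is precisely the stated $P_{c-\tau+1}$. The subtlety I must confront is that this quantity is the expected number of compromised shards, and it coincides with the probability that at least one shard is compromised only when the events $\{X_{i}=1\}$ are mutually exclusive, i.e. when at most one shard can cross the threshold. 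I would check this disjointness in the present regime: two simultaneously compromised shards would require at least $2(c-\tau+1)$ Sybil IDs, so the events are genuinely disjoint whenever $n<2(c-\tau+1)$, making the displayed equality exact there. When $n$ is large enough that two shards could both exceed the threshold, the same expression remains a valid upper bound by Boole's inequality, so $P_{c-\tau+1}$ should be read as (a bound on) the BCP success probability. Making this distinction explicit, rather than silently writing $2^{s}\times(\text{single-shard probability})$, is the main obstacle I expect to address in the write-up.
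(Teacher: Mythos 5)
Your proposal is correct and, in substance, identical to the paper's argument: both reduce to the hypergeometric probability $\binom{n}{m}\binom{N^{*}-n}{c-m}/\binom{N^{*}}{c}$ that a fixed shard of capacity $c$ receives exactly $m$ of the $n$ Sybil IDs, sum over $m\ge c-\tau+1$, and multiply by $2^{s}$. The paper reaches the same expression by explicitly counting ordered partitions of the $N^{*}$ IDs into $2^{s}$ shards, writing $n(S)=N^{*}!/(c!)^{2^{s}}$ and an event count $n(E)$ that tags a distinguished ``bad'' shard, then taking the ratio $n(E)/n(S)$; your indicator-variable formulation is just the probabilistic rephrasing of that count. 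The one genuine difference is your final paragraph: you observe that $2^{s}\times(\text{per-shard probability})$ is the expected number of compromised shards, hence equals the probability of ``at least one compromised shard'' only when the per-shard events are disjoint (i.e.\ $n<2(c-\tau+1)$), and is otherwise only a Boole upper bound. The paper's $n(E)$ has exactly the same double-counting when two shards can simultaneously exceed the threshold (which can occur within the lemma's stated range $c-\tau+1\le n\le c$, e.g.\ $c=100$, $\tau=67$, $n\ge 68$), yet the paper presents the formula as exact without comment. So your proof is not only consistent with the paper's but slightly more careful on the one point where the paper's derivation is loose.
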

\begin{proof}
Given the number of shards (i.e., $2^s$), the capacity of each shard (i.e., $c$), and the total number of selected IDs (i.e., $N^*$) the sample space and  the space of the desirable event (i.e., having at least $c-\tau +1$ Sybil IDs in one shard) will be:

\begin{equation}\label{total}
\fontsize{8}{5}
n(S)={{N^*} \choose c}{{N^*}-c \choose c}...{c \choose c}=\frac{{N^*}!}{c! c!...c!}=\frac{{N^*}!}{{c!}^{2^{s}}}.
\end{equation}

\begin{equation}
\fontsize{8}{5}
\begin{aligned}
n(E)=& 2^{s}{n \choose c-\tau +1}{{N^*}-n \choose \tau -1}{{N^*}-c \choose c}..{c \choose c}\\&+2^{s}{n \choose c-\tau +2}{{N^*}-n \choose \tau -2}{T-c \choose c}..{c \choose c}\\&+...+2^{s}{n \choose n}{{N^*}-n \choose c-n}{{N^*}-c \choose c}..{c \choose c}.
\end{aligned} 
\label{n(E1)}
\end{equation}
The probability $P_{c-\tau +1}$ can then be computed by:
\begin{equation}\nonumber
\fontsize{8}{5}
\begin{aligned}
P_{c-\tau +1}=\frac{n(E)}{n(S)}=\frac{2^{s}\sum\limits_{m=c-\tau +1}^{n}{n \choose m}{{N^*}-n \choose c-m}}{{{N^*}\choose c}}.
\end{aligned}
\end{equation}
\end{proof} 
\vspace{-10pt}
Please note that the above closed-form solution is correct if $n \leq c$. But for the values of $n$ bigger than $c$ we cannot employ the same calculation and deriving a closed-form solution was not possible. In the following analysis we employ numerical analysis to obtain these values, and we leave this derivation for our future work. The following theorem calculates the successful probability of BCP attack, when the number of Sybil IDs is smaller than or equal to $c$.
\begin{theorem}\label{thm:BCP1}
In a shard-based blockchain protocol, when the number of Sybil IDs (generated by the adversary $\mathcal{A}$) is smaller than or equal to $c$ and greater than $c-\tau$, the probability of a successful BCP attack in at least one shard is: 
\begin{equation}
P_{B}=\frac{2^{s}\sum\limits_{n=c- \tau +1}^{M}\sum\limits_{m=c- \tau +1}^{n}{M \choose n}{N-1 \choose {N^*}-n}{n \choose m}{{N^*}-n \choose c-m}}{{M+N-1 \choose {N^*}}{{N^*} \choose c}}.
\end{equation}
\end{theorem}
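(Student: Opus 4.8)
The plan is to compute $P_B$ by conditioning on the realized number of Sybil IDs that are drawn into the working pool of $N^*$ IDs, and then to invoke the two lemmas already established. The key observation is that the event ``BCP attack succeeds'' partitions across the disjoint events $\{x=n\}$ for the admissible values of $n$, and that the attack can succeed only when at least $c-\tau+1$ Sybil IDs land in some shard. I would therefore apply the law of total probability in the form
\begin{equation}\nonumber
P_B = \sum_{n} P\{x=n\}\cdot P(\text{BCP succeeds}\mid x=n),
\end{equation}
where $P\{x=n\}$ is supplied by Lemma~\ref{pr} and the conditional factor is precisely the quantity $P_{c-\tau+1}$ computed in Lemma~\ref{P(c-tau+1)}.

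First I would pin down the range of summation. Since a successful BCP attack demands more than $c-\tau$ Sybil IDs in at least one shard, the number of selected Sybil IDs must satisfy $n \geq c-\tau+1$; every $n \leq c-\tau$ contributes zero and is dropped. At the upper end, $n$ cannot exceed the number $M$ of Sybil IDs the adversary can generate. Because the hypothesis restricts attention to $M \leq c$, each admissible $n$ automatically satisfies $n \leq c$, which is exactly the regime in which the closed form of Lemma~\ref{P(c-tau+1)} is valid. Hence the outer sum runs over $n = c-\tau+1, \ldots, M$. Substituting the two closed forms, pulling the common factor $2^s$ and the two denominators $\binom{M+N-1}{N^*}$ and $\binom{N^*}{c}$ outside the summation, and leaving the inner index $m$ from Lemma~\ref{P(c-tau+1)} in place, directly produces the stated double-sum expression. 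This final assembly is purely mechanical rearrangement of the product of the two lemmas.

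The main obstacle, and the only step that genuinely needs argument rather than computation, is the conditional independence implicit in writing the summand as the product $P\{x=n\}\cdot P_{c-\tau+1}$. I would justify this by observing that once $n$ Sybil IDs have been selected into the pool of $N^*$ working IDs---the hypergeometric draw captured by Lemma~\ref{pr}---the subsequent placement of those IDs into shards is governed solely by the last $s$ bits of each pseudorandom ID and is therefore independent of how the selection itself turned out. This independence is what licenses multiplying the selection probability by the per-$n$ shard-placement probability before summing over $n$. A secondary point worth flagging is that the factor $2^s$ inherited from Lemma~\ref{P(c-tau+1)} treats the single-shard threshold events additively across the $2^s$ shards; I take this counting as given from that lemma and do not re-derive it here, so the remaining work reduces entirely to the total-probability decomposition above.
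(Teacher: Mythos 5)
Your proposal is correct and follows essentially the same route as the paper: decompose $P_B$ by the law of total probability over the events $\{x=n\}$ for $n=c-\tau+1,\ldots,M$, multiply $P\{x=n\}$ from Lemma~\ref{pr} by $P_{c-\tau+1}$ from Lemma~\ref{P(c-tau+1)}, and rearrange into the double sum. Your added justifications (the independence of shard placement from the selection step, and the validity of the $n\le c$ regime) are points the paper leaves implicit, and you correctly keep the $n$-dependence of $P_{c-\tau+1}$ inside the sum, which the paper's intermediate display glosses over.
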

\begin{proof}
The probability of a successful BCP attack in at least one shard can be computed as:

\begin{equation}
\fontsize{8}{5}
\begin{aligned}
P_{B}&=P\{x=c-\tau +1\}P_{c-\tau +1} + P\{x=c-\tau +2\}P_{c-\tau +1}\\&+...+P\{x=M\}P_{c-\tau +1} =(\sum\limits_{n=c-\tau +1}^{M}P\{x=n\})P_{c-\tau +1}.
\end{aligned}
\label{thm:BCP1-1}
\end{equation} 
Given Lemma \ref{pr} and Lemma \ref{P(c-tau+1)}, we can rewrite $P_B$ by:
\begin{equation}\nonumber
\fontsize{8}{5}
\begin{aligned}
&P_{B}=(\frac{\sum\limits_{n=c-\tau +1}^{M}{M \choose n}{N-1 \choose {N^*}- n}}{{M+N-1 \choose {N^*}}})(\dfrac{2^{s}\sum\limits_{m=c-\tau +1}^{n}{n \choose m}{{N^*}-n \choose c-m}}{{{N^*}\choose c}})\\ &=\frac{2^{s}\sum\limits_{n=c-\tau +1}^{M}\sum\limits_{m=c-\tau +1}^{n}{M \choose n}{N-1 \choose {N^*}- n}{n \choose m}{{N^*}-n \choose c-m}}{{M+N-1 \choose {N^*}}{{N^*}\choose c}}.
\end{aligned}
\end{equation}
\end{proof}
Having computed this probability, we now attempt to determine the impact of different shard-based blockchain system design parameters on the robustness against such Sybil attacks. In the following two theorems, we present similar success probability estimations for higher values of $M$.
\begin{theorem}\label{thm:BCP2}
In a shard-based blockchain protocol, when the number of Sybil IDs (generated by the adversary $\mathcal{A}$) is smaller than or equal to $2^s(c-\tau)$ and greater than $c$, the probability of a successful BCP attack in at least one shard is: 
\begin{equation}
\fontsize{8}{5}
\begin{aligned}
P_{B} &=\frac{2^{s}\sum\limits_{n=c-\tau +1}^{c}\sum\limits_{m=c-\tau +1}^{n}{M \choose n}{N-1 \choose {N^*}- n}{n \choose m}{{N^*}-n \choose c-m}}{{M+N-1 \choose {N^*}}{{N^*}\choose c}} \\
&+\frac{\sum\limits_{n=c+1}^{M}{M\choose n}{N-1\choose {N^*}-n}}{{M+N-1\choose {N^*}}}P'_{c-\tau +1}.
\end{aligned}
\end{equation}
\end{theorem}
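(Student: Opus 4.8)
The plan is to mirror the proof of Theorem~\ref{thm:BCP1} and build $P_B$ from the law of total probability over the random variable $x$, which counts how many of the adversary's IDs are drawn into the $N^*$-sized active pool. Conditioning on $\{x=n\}$, whose probability is the hypergeometric expression of Lemma~\ref{pr}, the success probability decomposes as
\begin{equation}\nonumber
P_B = \sum_{n=c-\tau+1}^{M} P\{x=n\}\cdot P(\text{BCP success}\mid x=n),
\end{equation}
where the lower index $c-\tau+1$ reflects the fact (established in Section~\ref{subsec:threatmodel}) that fewer than $c-\tau+1$ selected Sybil IDs can never overwhelm a shard, and the upper index $M$ is the largest attainable value of $n$.

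The new ingredient relative to Theorem~\ref{thm:BCP1} is that the closed-form conditional probability $P_{c-\tau+1}$ of Lemma~\ref{P(c-tau+1)} is valid only for $n\le c$; the remark following that lemma flags that for $n>c$ the same combinatorial enumeration breaks down, since a shard of capacity $c$ can now be entirely occupied by Sybil IDs. Because the hypothesis here permits $M>c$, the summation range $[c-\tau+1,\,M]$ genuinely straddles $n=c$, so I would split the sum at that point:
\begin{equation}\nonumber
P_B = \sum_{n=c-\tau+1}^{c} P\{x=n\}\,P_{c-\tau+1} \;+\; \sum_{n=c+1}^{M} P\{x=n\}\,P'_{c-\tau+1},
\end{equation}
where $P'_{c-\tau+1}$ is the (no longer closed-form) probability of at least $c-\tau+1$ Sybil IDs in one shard for $n\in(c,2^s(c-\tau)]$, as catalogued in Table~\ref{vartable}. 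For the first block I would substitute the explicit forms from Lemma~\ref{pr} and Lemma~\ref{P(c-tau+1)} and collapse the product of binomials over the common denominators into the displayed double sum; this is exactly the bookkeeping behind Theorem~\ref{thm:BCP1}, only with the upper index on $n$ frozen at $c$ rather than $M$.

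For the second block I would substitute only Lemma~\ref{pr} for $P\{x=n\}$ and leave $P'_{c-\tau+1}$ factored outside the sum, yielding precisely the second additive term of the claim. The main obstacle is conceptual rather than computational: one must argue that throughout $c<n\le M\le 2^s(c-\tau)$ the conditional event probability is correctly represented by the single symbol $P'_{c-\tau+1}$ and that pulling it out of the summation is legitimate. The upper bound $M\le 2^s(c-\tau)$ is precisely what confines us to this intermediate regime — one step beyond it ($n>2^s(c-\tau)$), the pigeonhole principle forces some shard to hold more than $c-\tau$ Sybil IDs with certainty, which is the separate regime $P''_{c-\tau+1}$ treated in the following theorem. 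Accordingly, I do not expect to obtain $P'_{c-\tau+1}$ in closed form; following the remark after Lemma~\ref{P(c-tau+1)}, it is evaluated numerically and merely named, so the proof reduces to partitioning the total-probability sum at $n=c$ and performing the routine substitution in the first block.
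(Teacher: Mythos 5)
Your proposal is correct and follows essentially the same route as the paper: apply the law of total probability over $x=n$, split the summation at $n=c$, substitute Lemma~\ref{pr} together with the closed form of Lemma~\ref{P(c-tau+1)} in the block $n\le c$, and leave the block $n>c$ expressed through the symbol $P'_{c-\tau+1}$. Your side remark that $P'_{c-\tau+1}$ implicitly depends on $n$ and that factoring it out of the sum needs justification is a fair observation, but the paper's proof performs exactly the same step with the same notational convention, so there is no substantive divergence.
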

\begin{proof}
Similar to the proof of Theorem~\ref{thm:BCP1}, this probability can be calculated by:
\begin{equation}
\fontsize{8}{5}
\nonumber
\begin{aligned}
P_{B}&=(\sum\limits_{n=c-\tau +1}^{c}P\{x=n\})P_{c-\tau +1}+(\sum\limits_{n=c+1}^{M}P\{x=n\})P'_{c-\tau +1}.
\end{aligned}
\end{equation}

And we can replace $P_{c-\tau +1}$ to find the probability of successful attack by:
\begin{equation}
\fontsize{8}{5}
\nonumber
\begin{aligned}
P_{B}=&\frac{2^{s}\sum\limits_{n=c-\tau +1}^{c}\sum\limits_{m=c-\tau +1}^{n}{M \choose n}{N-1 \choose {N^*}- n}{n \choose m}{{N^*}-n \choose c-m}}{{M+N-1 \choose {N^*}}{{N^*}\choose c}}\\&+\frac{\sum\limits_{n=c+1}^{M}{M\choose n}{N-1\choose {N^*}-n}}{{M+N-1\choose {N^*}}}P'_{c-\tau +1}.
\end{aligned}
\end{equation} 
%
\end{proof}
\begin{theorem}\label{thm:BCP3}
In a shard-based blockchain protocol, when the number of Sybil IDs (generated by the adversary $\mathcal{A}$) is greater than $2^s(c-\tau)$, the probability of a successful BCP attack in at least one shard is: 
\begin{equation}
\begin{aligned}
&P_{B} =\frac{2^{s}\sum\limits_{n=c-\tau +1}^{c}\sum\limits_{m=c-\tau +1}^{n}{M \choose n}{N-1 \choose {N^*}- n}{n \choose m}{{N^*}-n \choose c-m}}{{M+N-1 \choose {N^*}}{{N^*}\choose c}}\\
&+\frac{\sum\limits_{n=c+1}^{2^{s}(c-\tau)}{M\choose n}{N-1\choose {N^*}-n}}{{M+N-1\choose {N^*}}}P'_{c-\tau +1}+\frac{\sum\limits_{n=2^{s}(c-\tau)+1}^{min(M,{N^*})}{M\choose n}{N-1\choose {N^*}-n}}{{M+N-1\choose {N^*}}}.
\end{aligned}
\end{equation}
\end{theorem}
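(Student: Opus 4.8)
The plan is to reuse the total-probability decomposition employed in Theorems~\ref{thm:BCP1} and~\ref{thm:BCP2}, but now to partition the range of the random variable $x$ into \emph{three} disjoint regions rather than two. Writing $P_B = \sum_n P\{x=n\}\, q_n$, where $q_n$ denotes the conditional probability that at least one shard holds at least $c-\tau+1$ Sybil IDs given $x=n$, I would split the summation index $n$ at the two breakpoints $c$ and $2^s(c-\tau)$. These correspond exactly to the three regimes in which $q_n$ takes the forms $P_{c-\tau+1}$, $P'_{c-\tau+1}$, and $P''_{c-\tau+1}$ listed in Table~\ref{vartable}.

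For the first region, $n \in [c-\tau+1,\, c]$, I would substitute the closed form $P_{c-\tau+1}$ from Lemma~\ref{P(c-tau+1)}, reproducing the first summand. For the second region, $n \in [c+1,\, 2^s(c-\tau)]$, the closed form is no longer valid (as noted after Lemma~\ref{P(c-tau+1)}), so I would carry $q_n$ as the numerically evaluated quantity $P'_{c-\tau+1}$, giving the second summand. In both cases $P\{x=n\}$ is replaced using the hypergeometric expression from Lemma~\ref{pr}.

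The essential new step is the third region, $n \in [2^s(c-\tau)+1,\, \min(M,N^*)]$, and this is where a pigeonhole argument does the work. There are only $2^s$ shards, and a shard fails to be compromised precisely when it holds at most $c-\tau$ Sybil IDs; hence the largest number of Sybil IDs that can be distributed while leaving \emph{every} shard uncompromised is $2^s(c-\tau)$. Once $n$ strictly exceeds this bound, at least one shard must receive at least $c-\tau+1$ Sybil IDs no matter how the distribution is realised, so the conditional success probability is identically $P''_{c-\tau+1}=1$. This region therefore contributes the bare marginal $\sum_{n=2^s(c-\tau)+1}^{\min(M,N^*)} P\{x=n\}$ with no multiplicative factor, yielding the third summand after substituting Lemma~\ref{pr}. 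Summing the three contributions gives the stated expression; the upper limit $\min(M,N^*)$ reflects the two hard caps on $n$, namely that the adversary can select at most the $M$ IDs it actually generated and that at most $N^*$ IDs are drawn in total.

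I expect the only subtlety to be stating the pigeonhole claim cleanly: arguing that the bound is tight, so that the breakpoint is exactly $2^s(c-\tau)$ rather than some smaller value, and verifying that the three index ranges partition $[c-\tau+1,\, \min(M,N^*)]$ without overlap or gap. The remaining manipulations are routine substitutions of Lemmas~\ref{pr} and~\ref{P(c-tau+1)}, identical in form to the earlier theorems.
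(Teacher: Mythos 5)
Your proposal matches the paper's proof: the same three-region decomposition of the sum over $n$ at breakpoints $c$ and $2^{s}(c-\tau)$, the same substitutions of Lemmas~\ref{pr} and~\ref{P(c-tau+1)} in the first region, the same treatment of $P'_{c-\tau+1}$ as a numerically evaluated factor in the second, and the same pigeonhole argument establishing $P''_{c-\tau+1}=1$ in the third. No meaningful differences.
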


\begin{proof}
Similar to earlier proofs, we can compute this probability by:
\begin{equation}
\fontsize{8}{5}
\begin{aligned}
P_B &=(\sum\limits_{n=c-\tau +1}^{c}P\{x=n\})P_{c-\tau +1}+(\sum\limits_{n=c+1}^{2^{s}(c-\tau)}P\{x=n\})P'_{c-\tau +1}\\&+(\sum\limits_{n=2^{S}(c-\tau)+1}^{min(M,{N^*})}P\{x=n\})P''_{c-\tau +1}.
\end{aligned}
\label{thm:BCP3-1}
\end{equation}
Since $n > 2^{s}(c-\tau )$, $c-\tau$ Sybil IDs (generated by the adversary) will definitely reside in each shard. In other words, given the total number of Sybil IDs generated by the adversary, at least $c-\tau +1$ of these IDs will be placed in at least one shard. When this happens, $\mathcal{A}$ can compromise the consensus protocol of this shard with probability 1. So $P''_{c-\tau +1}=1$. 
%
Moreover based on Lemma \ref{pr} and Lemma \ref{P(c-tau+1)}, Equation \eqref{thm:BCP3-1} can be rewritten as:
\begin{equation}\nonumber
\fontsize{8}{5}
\begin{aligned}
&P_{B}=\frac{2^{s}\sum\limits_{n=c-\tau +1}^{c}\sum\limits_{m=c-\tau +1}^{n}{M \choose n}{N-1 \choose {N^*}- n}{n \choose m}{{N^*}-n \choose c-m}}{{M+N-1 \choose {N^*}}{{N^*}\choose c}}\\&+\frac{\sum\limits_{n=c+1}^{2^{s}(c-\tau)}{M\choose n}{N-1\choose {N^*}-n}}{{M+N-1\choose {N^*}}}P'_{c-\tau +1}+\frac{\sum\limits_{n=2^{s}(c-\tau)+1}^{min(M,{N^*})}{M\choose n}{N-1\choose {N^*}-n}}{{M+N-1\choose {N^*}}}.
\end{aligned}
\end{equation}
\end{proof}
%
\vspace{-12pt}
In the following, we will similarly analyze the success probability of an adversary in executing the GFT attack. 
%
\subsection{Probability of a Successful GFT Attack}
In GFT attack, $\mathcal{A}$ would like to change the output of at least one shard, and insert his favorite transactions in the blockchain. The calculation of successful probability of this attack  is similar to the previous section (i.e., BCP attack), but the number of required Sybil IDs is different. In this attack, $\mathcal{A}$ must have more than $\tau$ Sybil IDs in at least one shard. Hence, if $M < \tau$, the adversary cannot successfully execute the GFT attacks (i.e., probability of success is zero). In order to estimate the success probability of an adversary in carrying out GFT attacks, we first need to calculate the probability of having at least $\tau$ Sybil IDs in one shard, when a total of $x=n$ Sybil IDs (generated by the adversary) have been chosen from the pool of all generated IDs (during the initialization step). The following lemma calculates this probability.
\begin{lemma} 
In a shard-based blockchain protocol, when $n$ Sybil IDs (generated by the adversary $\mathcal{A}$) have been chosen or selected from the ID pool after the initialization step, the probability of having at least $\tau $ Sybil IDs in one shard is:
\begin{equation}\nonumber
\fontsize{8}{6}
P_\tau =\frac{2^{s}\sum\limits_{m=\tau}^{n}{n \choose m}{{N^*}-n \choose c-m}}{{{N^*}\choose c}}.
\end{equation}
where $\tau \leq n$. 
\label{P tau}
\end{lemma}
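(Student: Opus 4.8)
The plan is to reproduce, almost verbatim, the combinatorial counting argument from the proof of Lemma~\ref{P(c-tau+1)}, changing only the single-shard threshold the adversary must reach from $c-\tau+1$ to $\tau$. The distribution of the $N^*$ selected IDs into the $2^s$ shards of capacity $c$ is exactly the same experiment as in the BCP case, so the sample space is unchanged:
\[
n(S)=\binom{N^*}{c}\binom{N^*-c}{c}\cdots\binom{c}{c}=\frac{N^*!}{(c!)^{2^s}}.
\]
Because the denominator is identical, the only thing that needs to be recomputed is the count of favorable configurations.

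First I would count the favorable event $n(E)$, namely the configurations in which some shard holds at least $\tau$ of the $n$ Sybil IDs. I would fix a target shard ($2^s$ choices), place exactly $m$ Sybil IDs in it for each $m$ from $\tau$ up to $n$ via $\binom{n}{m}$, fill its remaining $c-m$ slots from the $N^*-n$ honest IDs via $\binom{N^*-n}{c-m}$, and distribute the leftover IDs among the other $2^s-1$ shards via the product $\binom{N^*-c}{c}\cdots\binom{c}{c}$. Summing over $m$ and over the choice of target shard gives
\[
n(E)=2^s\sum_{m=\tau}^{n}\binom{n}{m}\binom{N^*-n}{c-m}\binom{N^*-c}{c}\cdots\binom{c}{c}.
\]
Forming the ratio $P_\tau=n(E)/n(S)$, the trailing product $\binom{N^*-c}{c}\cdots\binom{c}{c}$ cancels between numerator and denominator, leaving exactly the stated expression with $\binom{N^*}{c}$ in the denominator. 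This routine cancellation is the only computation and I would not belabor it.

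The main subtlety I would flag carefully is the multiplicative factor $2^s$. It counts a configuration once for every shard that individually meets the $\tau$-threshold, so it is exact only when at most one shard can simultaneously contain $\geq\tau$ Sybil IDs. Two disjoint shards meeting the threshold would require at least $2\tau$ Sybil IDs, so the count is exact precisely when $n<2\tau$. This is the relevant regime here: $\tau$ is a consensus majority (the protocol requires at least $c/2+1$ agreeing members, so $\tau>c/2$), and under the accompanying assumption $n\le c$ we have $n\le c<2\tau$, guaranteeing disjointness. The same bound $n\le c$ also validates the upper summation limit, since it forces $m\le n\le c$ and no shard's capacity is exceeded. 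For $n>c$ the factor over-counts and, exactly as already noted for the BCP case in the remark following Lemma~\ref{P(c-tau+1)}, no closed form is available and the probability must be obtained numerically.
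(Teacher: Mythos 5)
Your proposal is correct and follows essentially the same counting argument as the paper's proof: identical sample space $n(S)=N^*!/(c!)^{2^s}$, the same enumeration of favorable configurations summed over $m=\tau,\dots,n$, and the same cancellation of the trailing product. Your added justification of why the factor $2^s$ does not over-count when $n\le c$ (since $\tau>c/2$ forces at most one shard to reach the threshold) is a welcome refinement that the paper leaves implicit.
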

\begin{proof}
According to the Lemma \ref{P(c-tau+1)}, the sample space is equal to Equation~\eqref{total}.
Similarly, by considering all possible combinations of Sybil ID distributions within the shards, we can compute the space of the desirable event (i.e., having at least $\tau $ Sybil IDs in one shard) by:
\begin{equation}\nonumber
\fontsize{8}{6}
\begin{aligned}
n(E)=& 2^{s}{n \choose \tau}{{N^*}-n \choose c-\tau}{{N^*}-c \choose c}..{c \choose c}\\&+2^{s}{n \choose \tau +1}{{N^*}-n \choose c-\tau -1}{{N^*}-c \choose c}..{c \choose c}\\&+...+2^{s}{n \choose n}{{N^*}-n \choose c-n}{{N^*}-c \choose c}..{c \choose c}
\end{aligned} 
\label{n(E2)}
\end{equation} 
The probability $P_\tau $ can then be computed by taking the ratio of the event space to the sample space as:
\begin{equation}
P_\tau =\dfrac{n(E)}{n(S)}=\dfrac{2^{s}\sum\limits_{m=\tau}^{n}{n \choose m}{{N^*}-n \choose c-m}}{{{N^*}\choose c}}.
\end{equation}
\end{proof}
\vspace{-6pt}
Similar to BCP attack, the above closed-form solution is correct if $n \leq c$. Finding a closed-form solution for the case where $n > c$ is a part of our future challenges. 
%
The following theorem calculates the successful probability of GFT attack, when the number of Sybil IDs is smaller than or equal to $c$.

\begin{theorem}\label{thm:GFT1}
In a shard-based blockchain protocol, when the number of Sybil IDs (generated by the adversary $\mathcal{A}$) is smaller than or equal to $c$ and greater than $\tau -1$, the probability of a successful GFT attack in at least one shard is: 
\begin{equation}
P_{G}=\frac{2^{s}\sum\limits_{n=\tau}^{M}\sum\limits_{m=\tau }^{n}{M \choose n}{N-1 \choose {N^*}-n}{n \choose m}{{N^*}-n \choose c-m}}{{M+N-1 \choose {N^*}}{{N^*} \choose c}}.
\end{equation}
\end{theorem}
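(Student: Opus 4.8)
The plan is to obtain $P_G$ exactly as $P_B$ was obtained in Theorem~\ref{thm:BCP1}, by conditioning on the number $n$ of the adversary's IDs that survive the random selection of $N^*$ active IDs and then summing over the mutually exclusive values of $n$. Both ingredients are already available: Lemma~\ref{pr} supplies the hypergeometric probability $P\{x=n\}$ that exactly $n$ of the adversary's $M$ IDs are drawn into the active pool, and Lemma~\ref{P tau} supplies the conditional probability $P_\tau$ that, once those $n$ Sybil IDs are partitioned among the $2^s$ shards, at least $\tau$ of them occupy a single shard. Since a GFT attack succeeds precisely when some shard contains at least $\tau$ Sybil IDs, these are exactly the quantities I need to combine.

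First I would fix the summation range. The target event is vacuous whenever fewer than $\tau$ Sybil IDs are selected in total, so only $n\in\{\tau,\tau+1,\dots,M\}$ contribute and the outer sum runs from $\tau$ to $M$. Next, for each such $n$ I would write the probability of a successful attack under ``exactly $n$ selected'' as the product $P\{x=n\}\,P_\tau$, which is just the chain rule $P(A\cap B)=P(A)\,P(B\mid A)$ with $A$ the event that exactly $n$ Sybil IDs are selected and $P_\tau=P(B\mid A)$ the conditional probability furnished by Lemma~\ref{P tau}. Because the events $A$ are mutually exclusive across $n$, the law of total probability gives
\begin{equation}\nonumber
P_G=\sum_{n=\tau}^{M}P\{x=n\}\,P_\tau,
\end{equation}
and substituting the closed forms from the two lemmas and clearing the common denominators $\binom{M+N-1}{N^*}$ and $\binom{N^*}{c}$ yields the stated double-sum expression.

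I expect the only delicate point --- and the reason for the hypothesis $M\le c$ --- to be the range of validity of $P_\tau$, rather than any combinatorial difficulty. As the remark following Lemma~\ref{P tau} notes, the closed form for $P_\tau$ holds only when $n\le c$; for $n>c$ the underlying event count is different and no closed form is claimed. Under the assumption $\tau-1<M\le c$, every index in the outer sum satisfies $n\le M\le c$, so Lemma~\ref{P tau} applies verbatim at each term and the derivation goes through with no case split. The remaining algebra is the same routine substitution already performed in Theorem~\ref{thm:BCP1}, now with the threshold $\tau$ in place of $c-\tau+1$.
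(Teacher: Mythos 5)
Your proposal is correct and follows essentially the same route as the paper: decompose by the number $n$ of selected Sybil IDs via the law of total probability, multiply the hypergeometric term from Lemma~\ref{pr} by the conditional probability $P_\tau$ from Lemma~\ref{P tau}, and sum over $n=\tau,\dots,M$. Your explicit observation that the hypothesis $M\le c$ keeps every index within the range of validity of $P_\tau$ is a point the paper leaves implicit, but the argument is the same.
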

\begin{proof}
This probability can be computed similar to  the proof of Theorem~\ref{thm:BCP1} as:
\begin{equation}
\fontsize{8}{5}
\begin{aligned}
P_{G}&=(\sum\limits_{n=\tau}^{M}P\{x=n\})P_\tau.
\end{aligned}
\label{thm:GFT1-1}
\end{equation} 
Based on Lemma \ref{pr} and Lemma \ref{P tau}, Equation~\eqref{thm:GFT1-1} can be rewritten as:
\begin{equation}\nonumber
\fontsize{8}{5}
\begin{aligned}
P_{G}&=(\dfrac{\sum\limits_{n=\tau}^{M}{M \choose n}{N-1 \choose {N^*}- n}}{{M+N-1 \choose {N^*}}})(\dfrac{2^{s}\sum\limits_{m=\tau}^{n}{n \choose m}{{N^*}-n \choose c-m}}{{{N^*}\choose c}}) \\& =\dfrac{2^{s}\sum\limits_{n=\tau}^{M}\sum\limits_{m=\tau}^{n}{M \choose n}{N-1 \choose {N^*}- n}{n \choose m}{{N^*}-n \choose c-m}}{{M+N-1 \choose {N^*}}{{N^*}\choose c}}.
\end{aligned}
\end{equation}
\end{proof}
Having computed this probability, we now attempt to determine the impact of different shard-based blockchain system design parameters on the robustness against such Sybil-based GFT attacks. In the following two theorems, we present similar success probability estimations for higher values of $M$.
\begin{theorem}\label{thm:GFT2}
In a shard-based blockchain protocol, when the number of Sybil IDs (generated by the adversary $\mathcal{A}$) is smaller than or equal to $2^s(\tau -1)$ and greater than $c$, the probability of a successful GFT attack in at least one shard is: 
\begin{equation}\label{Eqn_PG}
\fontsize{8}{5}
\begin{aligned}
P_{G} &=\frac{2^{s}\sum\limits_{n=\tau}^{c}\sum\limits_{m=\tau}^{n}{M \choose n}{N-1 \choose {N^*}- n}{n \choose m}{{N^*}-n \choose c-m}}{{M+N-1 \choose {N^*}}{{N^*}\choose c}}+\frac{\sum\limits_{n=c+1}^{M}{M\choose n}{N-1\choose {N^*}-n}}{{M+N-1\choose {N^*}}}P'_\tau .
\end{aligned}
\end{equation}
\end{theorem}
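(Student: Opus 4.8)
The plan is to mirror the argument that established the BCP analogue in Theorem~\ref{thm:BCP2}, replacing the consensus-breaking threshold $c-\tau+1$ by the fake-transaction threshold $\tau$. A GFT attack succeeds precisely when at least $\tau$ Sybil IDs come to reside in some single shard, so I would invoke the law of total probability, conditioning on the realized value $x=n$ of the random variable that counts how many Sybil IDs are selected into the active pool. By Lemma~\ref{P tau}, the conditional probability of landing at least $\tau$ Sybil IDs in one shard, given $x=n$, equals the closed-form expression $P_\tau$ whenever $n\le c$; for $n>c$ this conditional probability is the quantity $P'_\tau$ introduced in Table~\ref{vartable}, for which (as noted immediately after Lemma~\ref{P tau}) no closed form is available.

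The key step is therefore to split the feasible summation range $\tau \le n \le M$ at the threshold $n=c$, since that is exactly where the closed form of Lemma~\ref{P tau} ceases to apply. This partition yields
\begin{equation}\nonumber
P_{G}=\Big(\sum_{n=\tau}^{c} P\{x=n\}\Big)P_\tau+\Big(\sum_{n=c+1}^{M} P\{x=n\}\Big)P'_\tau.
\end{equation}
I would then substitute $P\{x=n\}$ from Lemma~\ref{pr} (the hypergeometric selection probability) and $P_\tau$ from Lemma~\ref{P tau} into the first term, while leaving the second term with the symbolic factor $P'_\tau$ intact because it cannot be reduced further. Collecting the two contributions reproduces the stated expression in Equation~\eqref{Eqn_PG}.

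The point that requires genuine care is the role of the upper hypothesis $M \le 2^s(\tau-1)$. A counting (pigeonhole) argument shows that once more than $2^s(\tau-1)$ Sybil IDs are distributed across the $2^s$ shards, some shard is forced to hold at least $\tau$ of them, so success becomes certain and the conditional probability jumps to $1$. The hypothesis $M \le 2^s(\tau-1)$ guarantees we never enter this forced-success regime, so the entire upper range $c < n \le M$ legitimately uses $P'_\tau$ and no additional term equal to $1$ appears; that extra term is precisely what will arise in the companion theorem for larger $M$ (where $P''_\tau=1$, in analogy with the treatment of $P''_{c-\tau+1}$ in Theorem~\ref{thm:BCP3}). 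The main obstacle is thus not the algebra, which is routine given Lemmas~\ref{pr} and~\ref{P tau}, but verifying that the two-region partition of $[\tau,M]$ is exhaustive and disjoint and that no $n$ in $(c,M]$ triggers the guaranteed-placement case under the stated bound on $M$; once these boundary conditions are justified, the substitution and simplification are immediate.
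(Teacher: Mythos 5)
Your proposal is correct and follows essentially the same route as the paper: the same split of the total-probability sum at $n=c$, substitution of Lemma~\ref{pr} and Lemma~\ref{P tau} into the first term, and retention of the symbolic factor $P'_\tau$ in the second. Your additional remark on why the hypothesis $M \le 2^s(\tau-1)$ rules out the forced-success regime is a worthwhile clarification that the paper leaves implicit, but it does not change the argument.
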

\begin{proof}
Similar to the proof of Theorem~\ref{thm:GFT1}, this probability can be calculated by:
\begin{equation}\nonumber
\fontsize{8}{5}
\begin{aligned}
P_{G}&= (\sum\limits_{n=\tau}^{c}P\{x=n\})P_\tau+(\sum\limits_{n=c+1}^{M}P\{x=n\})P'_\tau .
\end{aligned}
\label{thm:GFT2-1}
\end{equation} 
It is then easy to show that $P_G$ is derived by Equation~(\ref{Eqn_PG}).
\end{proof}
\begin{theorem}\label{thm:GFT3}
In a shard-based blockchain protocol, when the number of Sybil IDs (generated by the adversary $\mathcal{A}$) is greater than $2^s(\tau -1)$, the probability of a successful GFT attack in at least one shard is: 
\begin{equation}\label{eqn_GFT3}
\fontsize{8}{5}
\begin{aligned}
P_{G} &=\frac{2^{s}\sum\limits_{n=\tau}^{c}\sum\limits_{m=\tau}^{n}{M \choose n}{N-1 \choose {N^*}- n}{n \choose m}{{N^*}-n \choose c-m}}{{M+N-1 \choose {N^*}}{{N^*}\choose c}}+\frac{\sum\limits_{n=c+1}^{2^{s}(\tau -1)}{M\choose n}{N-1\choose {N^*}-n}}{{M+N-1\choose {N^*}}}P'_\tau \\&+\frac{\sum\limits_{n=2^{s}(\tau -1)+1}^{min(M,{N^*})}{M\choose n}{N-1\choose {N^*}-n}}{{M+N-1\choose {N^*}}}.
\end{aligned}
\end{equation}
\end{theorem}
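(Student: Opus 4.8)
The plan is to mirror the proof of Theorem~\ref{thm:BCP3} essentially verbatim, with the consensus threshold $c-\tau+1$ replaced throughout by the GFT threshold $\tau$. First I would invoke the law of total probability, conditioning on the random variable $x$ that counts how many Sybil IDs are selected into the pool of $N^*$ distributed IDs. Since the success of the GFT attack requires at least $\tau$ Sybil IDs in some shard, only the values $n \in [\tau, \min(M,N^*)]$ contribute. I would then partition this admissible range into the three sub-intervals on which the per-value conditional probabilities $P_\tau$, $P'_\tau$, and $P''_\tau$ are respectively defined (namely $n\in[\tau,c]$, $n\in(c,2^{s}(\tau-1)]$, and $n>2^{s}(\tau-1)$), yielding the decomposition
\begin{equation}\nonumber
\fontsize{8}{5}
\begin{aligned}
P_{G} &=\Big(\sum_{n=\tau}^{c}P\{x=n\}\Big)P_\tau+\Big(\sum_{n=c+1}^{2^{s}(\tau-1)}P\{x=n\}\Big)P'_\tau \\ &+\Big(\sum_{n=2^{s}(\tau-1)+1}^{\min(M,N^*)}P\{x=n\}\Big)P''_\tau,
\end{aligned}
\end{equation}
exactly as in Equation~\eqref{thm:BCP3-1} but with threshold $\tau$.

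The crux of the argument is to show $P''_\tau=1$ on the topmost interval, and I would establish this by the same pigeonhole reasoning used for $P''_{c-\tau+1}$ in Theorem~\ref{thm:BCP3}. There are exactly $2^{s}$ shards, and all $n$ selected Sybil IDs are placed into them. If each shard held at most $\tau-1$ Sybil IDs, the total would be at most $2^{s}(\tau-1)$; hence whenever $n>2^{s}(\tau-1)$, at least one shard must contain $\tau$ or more Sybil IDs, giving the adversary control of that shard's PBFT consensus with certainty. Consequently $P''_\tau=1$ and the third term collapses to the bare selection-probability sum appearing in Equation~\eqref{eqn_GFT3}.

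Finally I would substitute Lemma~\ref{pr} for $P\{x=n\}$ and Lemma~\ref{P tau} for $P_\tau$ into the first term, combining the hypergeometric selection probability with the intra-shard distribution probability over the double index $(n,m)$, and retaining the denominator ${M+N-1 \choose N^*}{N^* \choose c}$ as in the earlier theorems. Collecting the three simplified terms then reproduces Equation~\eqref{eqn_GFT3}.

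The main obstacle is the middle regime $c<n\le 2^{s}(\tau-1)$: as the paper already notes after Lemma~\ref{P tau}, the closed-form count in Equation~\eqref{total} is only valid for $n\le c$, so $P'_\tau$ admits no closed form and must be carried symbolically. I do not expect to resolve this here; the honest move is to leave $P'_\tau$ as an opaque quantity (to be evaluated numerically, consistent with the stated future work), so that the proof is really a bookkeeping exercise of partitioning the range correctly, certifying the $P''_\tau=1$ boundary case by pigeonhole, and substituting the two available lemmas into the two tractable terms.
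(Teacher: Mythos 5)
Your proposal matches the paper's own proof of Theorem~\ref{thm:GFT3}: the same three-interval decomposition over $n$ via the law of total probability, the same pigeonhole argument establishing $P''_\tau=1$ when $n>2^{s}(\tau-1)$, the same substitution of Lemma~\ref{pr} and Lemma~\ref{P tau} into the tractable terms, and the same decision to leave $P'_\tau$ as an unevaluated quantity. No gaps; this is essentially the argument the authors give.
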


\begin{proof}
Similarly, the probability of a successful GFT attack can be computed as:
\begin{equation}
\fontsize{8}{5}
\begin{aligned}
P_{G}&=(\sum\limits_{n=\tau}^{c}P\{x=n\})P_\tau+(\sum\limits_{n=c+1}^{2^{s}(\tau -1)}P\{x=n\})P'_\tau\\&+(\sum\limits_{n=2^{s}(\tau-1)+1}^{min(M,{N^*})}P\{x=n\})P''_\tau .
\end{aligned}
\label{thm:GFT3-1}
\end{equation}
Since $n > 2^{s}(\tau -1 )$, $\tau -1$ Sybil IDs (generated by the adversary) will definitely reside in each shard. In other words, given the total number of Sybil IDs generated by the adversary, at least $\tau$ of these IDs will be placed in at least one shard. When this happens, $\mathcal{A}$ can compromise the PBFT consensus protocol to change the output produced by this shard with probability 1. So $P''_{\tau}=1$ and we can rewrite Equation~\eqref{thm:GFT3-1} as shown by Equation~(\ref{eqn_GFT3}).
\end{proof}
In the following section, we will verify our closed-form solutions with  the help of numerical simulations. 
\begin{figure*}[t]
\centering
\begin{subfigure}[t]{0.3\textwidth}
\includegraphics[width=\linewidth]{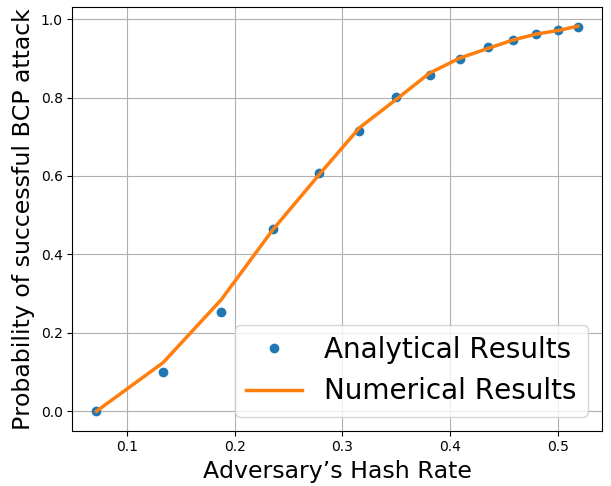}
\caption{}
\label{BCP verification 14 nodes}
\end{subfigure}
\hspace{12pt}
\begin{subfigure}[t]{0.3\textwidth}
\includegraphics[width=\linewidth]{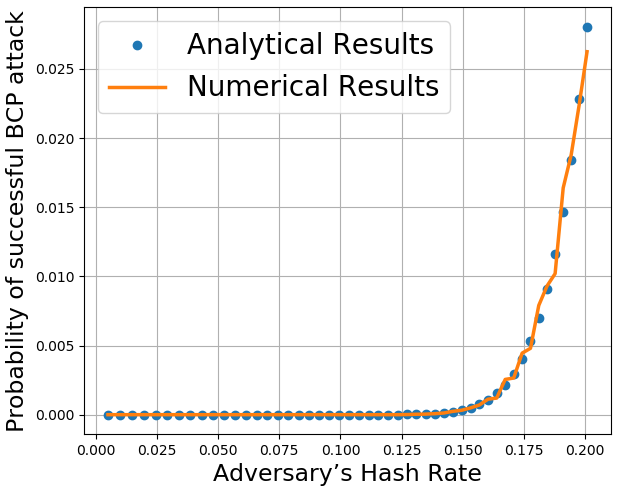}
\caption{}
\label{BCP verification 200 nodes}
\end{subfigure}
\hspace{12pt}
\begin{subfigure}[t]{0.3\textwidth}
\includegraphics[width=\linewidth]{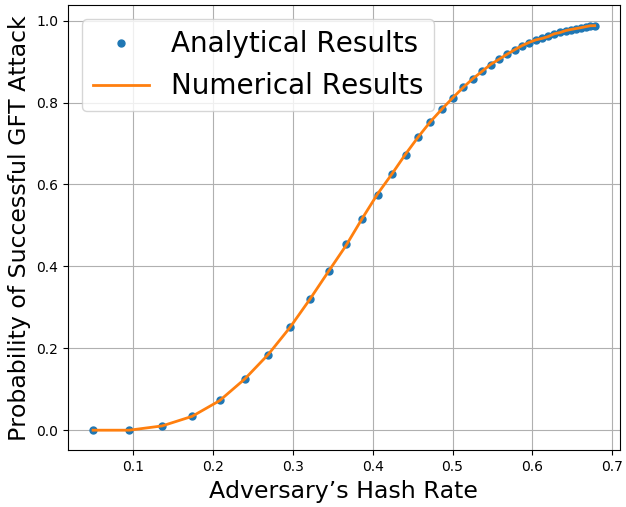}
\caption{}
\label{GFT verification 20 nodes}
\end{subfigure}
\caption{Our model verification. (a) BCP attack simulations  for $N=14$ with 4 shards. (b) BCP attack simulation with $N=200$ and 4 shards. (c) GFT attack simulation with $N=20$ and 4 shards.}
\label{verification}
\end{figure*}
\section{Numerical Results and Model Verifications}
\label{sec:numerical}

We validate the correctness of our analytical results discussed earlier by conducting extensive numerical simulations as outlined next. We implement a Python-based simulator for Elastico~\cite{luu2016secure} and simulate the BCP and GFT Sybil attacks by considering an adversary with different hash-powers. Success probabilities of these BCP and GFT attacks during simulations is computed and compared with our previous determined analytical results. We investigate the effect of various system parameters on the success probability of these attacks, including, number of shards (i.e., $2^s$), capacity of each shard (i.e., $c$), total number of participating nodes in the network (i.e., $N$) and the threshold for consensus (i.e., $\tau$). 

\begin{figure*}[t!]
\centering
\begin{subfigure}[t]{0.24\textwidth}
\includegraphics[width=\linewidth]{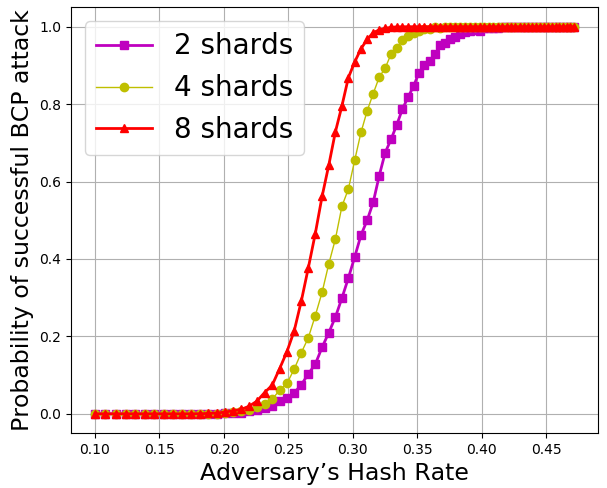}
\caption{}
\label{fig_sim_num_shard_1}
\end{subfigure}
\begin{subfigure}[t]{0.24\textwidth}
\includegraphics[width=\linewidth]{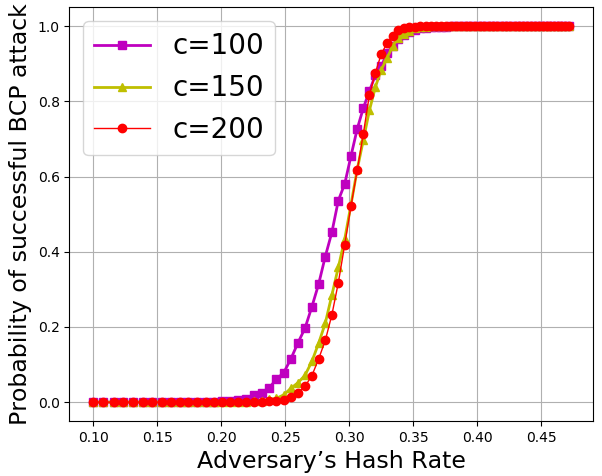}
\caption{}
\label{fig_sim_cap_shard_1}
\end{subfigure}
\begin{subfigure}[t]{0.24\textwidth}
\includegraphics[width=\linewidth]{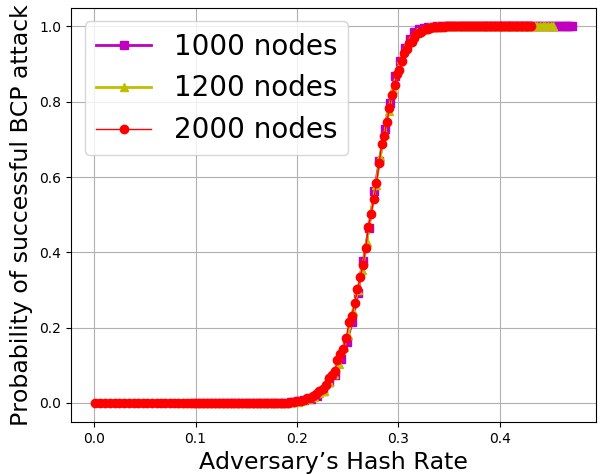}
\caption{}
\label{fig_sim_num_node_1}
\end{subfigure}
\begin{subfigure}[t]{0.24\textwidth}
\includegraphics[width=\linewidth]{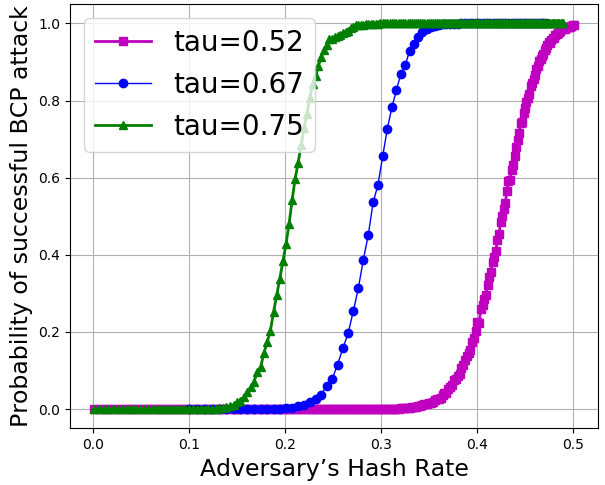}
\caption{}
\label{fig_sim_tau_1}
\end{subfigure}
\caption{The effect of (a) number of shards, (b) capacity of shard, (c) number of nodes, and (d)  $\tau$  on successful BCP attack.}
\label{BCP Attack}
\end{figure*}

\subsection{Validation of Analytical Results}

We start by considering a small network of 14 participating nodes (i.e., $N=14$) and 4 shards (i.e., $s=2$), each with a capacity of 3 nodes (i.e., $c=3$) undergoing a BCP attack by assuming different adversarial hash-rates. 
Figure~\ref{BCP verification 14 nodes}, which represents the probability of a successful BCP attack under this simulated scenario, shows that results from our simulations align very closely to those obtained from our analytical results (i.e., Theorems~\ref{thm:BCP1}, \ref{thm:BCP2} and \ref{thm:BCP3}). Here, the hash-rate (shown on the x-axis) is computed as the ratio of the adversary's hash-power ($\mathcal{A}$) to the average hash power of the entire network. Figure~\ref{BCP verification 200 nodes} shows similar results for a larger network with parameters $N=2
00$, $c=50$, and $s=2$. Even in this case, it can be observed that our analytical results are in line with the simulated behavior of the Elastico network.
We similarly verify the validity of our analytical results (Theorems~\ref{thm:GFT1}, \ref{thm:GFT2} and \ref{thm:GFT3}) for the GFT attacks. Figure \ref{GFT verification 20 nodes} shows the success probability of the GFT attacks in an Elastico blockchain network with parameters $N=20$, $c=4$, and $s=2$, and these results confirm that our analytical computation of these probabilities was correct.
Next, for each attack scenario, we will employ simulations to demonstrate the impact of different system parameters on the success probability of the attacks.

\subsection{BCP Numerical Analysis}

Figure~\ref{BCP Attack} shows the success probability of BCP attacks for different hash-powers of the adversary. Figure~\ref{fig_sim_num_shard_1} shows that the BCP attack probability increases when we increase the number of shards. In this experiment, the capacity ($c$) of each shard is set to 100 (based on Elastico) and the value of $\tau $ is set to $\frac{2}{3}$. Our results show that an adversary who has 25\% of the hash-power of network can compromise (and manipulate) the consensus algorithm employed by the shard-based protocol (e.g., PBFT). Figure~\ref{fig_sim_cap_shard_1} shows that the probability of successful BCP attacks decreases when the capacity ($c$) of each shard increases. We also evaluated the effect of the number of active/participating nodes on the probability of successful BCP attacks. We execute the simulations by setting the shard capacity ($c$) to 100, number of shards ($s$) to 4 and the threshold value $\tau$ to $\frac{2}{3}$. Results from these simulations show that the adversary ($\mathcal{A}$) needs to have 33\% of the hash-power of whole network to lunch a successful BCP attack. But accumulating 33\% of the network's hash-power will become more difficult by increasing the number of total nodes. 

Finally, we investigate the effect of $\tau$ on the success probability of BCP attacks. The results are shown in Figure~\ref{fig_sim_tau_1}. We vary the value of $\tau$ from $0.52$ and $0.75$. Our results show that if the value of $\tau$ changes from $\tau _1$ to $\tau _2$, the adversary needs about $1-(\tau _2 - \tau _1)$ hash-power of the previous hash-power, in order to achieve the same attack success probability for $\tau_1$. For example, if the BCP attack success probability of an adversary with hash-power $HP$ was $P_B$ with $\tau =\frac{2}{3}$, then the adversary would need to accumulate a hash-power of $0.91 \times HP$ in order to achieve the same success probability ($P_B$) with $\tau = 0.75$. 

\begin{figure*}[!h]
\centering
\begin{subfigure}[t]{0.24\textwidth}
\includegraphics[width=\linewidth]{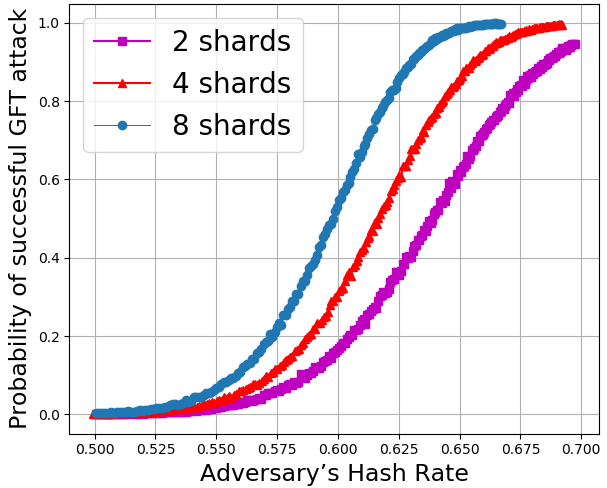}
\caption{}
\label{fig_sim_num_shard_2}
\end{subfigure}
\begin{subfigure}[t]{0.24\textwidth}
\includegraphics[width=\linewidth]{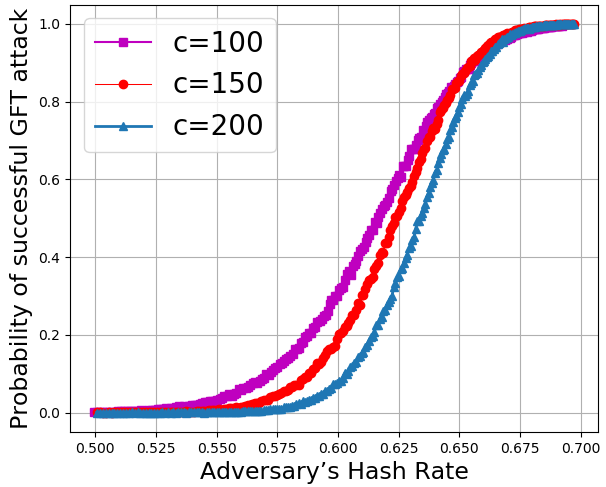}
\caption{}
\label{fig_sim_cap_shard_2}
\end{subfigure}
\begin{subfigure}[t]{0.24\textwidth}
\includegraphics[width=\linewidth]{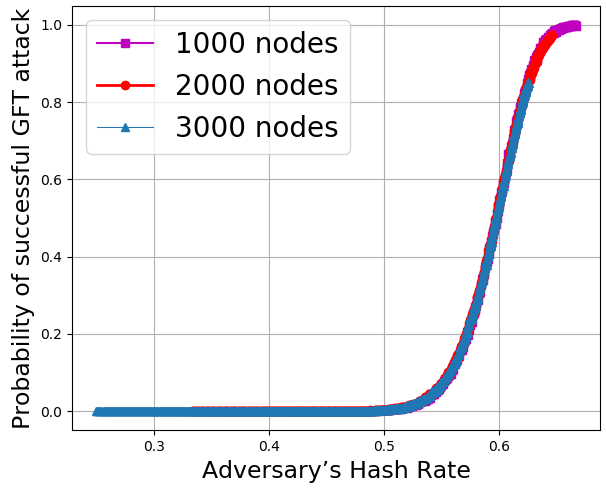}
\caption{}
\label{fig_sim_num_node_2}
\end{subfigure}
\begin{subfigure}[t]{0.24\textwidth}
\includegraphics[width=\linewidth]{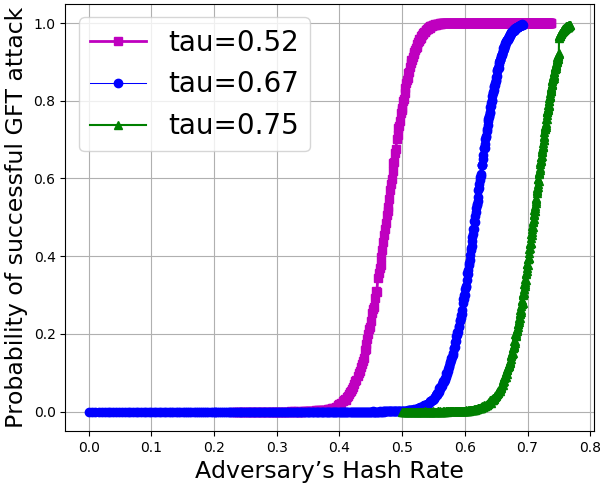}
\caption{}
\label{fig_sim_tau_2}
\end{subfigure}
\caption{The effect of (a) number of shards, (b) shard capacity, (c) number of nodes, and (d) $\tau$ on successful GFT attack.}
\label{GFT Attack}
\end{figure*}

\subsection{GFT Numerical Analysis}

Figure~\ref{GFT Attack} shows the success probability of GFT attacks for different values of the adversary's hash-power. As shown in Figure~\ref{fig_sim_num_shard_2}, if the adversary's hash-power is less than 50\% of the network hash-power, it cannot execute a successful GFT attack. We also observe that the GFT attack probability increases when the number of shards increases. Moreover, Figure~\ref{fig_sim_cap_shard_2} shows that shard capacity ($c$) increases, the GFT attack success probability decreases. It should be noted that this trend is observable only when the adversary's hash-power is less than 65\% of the total network hash-power. 
 If the adversary has more than 65\% of the total network hash-power, increasing the capacity ($c$) has no significant impact on $P_G$.

Figure~\ref{fig_sim_num_node_2} shows the effect of the total number of participating nodes in the network on the success of the GFT attack. Our simulation results show that if the adversary holds 65\% of the network's hash-power, it can successfully perform the GFT attack when the total number of participating nodes $2000$ or less. However, as the total number of participating nodes increases (e.g., $3000$), the success of the GFT attack is not guaranteed.
Figure~\ref{fig_sim_tau_2} shows the impact of $\tau$ on the success of GFT attacks. Here we see a similar trend as in the case of BCP attacks, where with higher values of $\tau$ the adversary needs more hash-power to conduct a successful GFT attack.

Table~\ref{tab:Final} summarizes a series of simulation of Sybil attack with different parameters. The results show that we can avoid successful GFT and BCP attack with an optimal selection of number of shards and their capacity even if the adversary's hash-power is about 25\% of the average hash-power of the network. But if the adversary's hash-power is more than 33\%, it can successfully deploy BCP attack. However, with maximum of 16 shards with capacity of 600, we can decrease the probability of successful GFT to less than $0.001$.  We believe these results can help designers to avoid attacks.

\begin{table}[t]
\caption{Numerical Evaluation of Sybil attacks.}\label{tab:Final}
\centering
\begin{tabular}{|c|c|c|c|c|c|}
\hline
\textbf{$h^{\mathcal{A}}$} & \textbf{$\tau$} & {\#shards}  & \textbf{$c$} & $P_{B}$ & $P_{G}$   \\ 
\hline
 $25\%$   &  $\frac{2}{3}$  & At most $16$   &  At least $600$  & $\leq 10^{-4}$  & 0 \\
\hline
 $[33\%, 53\%]$  &  $\frac{2}{3}$  & -   & -  & $\geq0.8$   & $\leq 0.005$ \\
\hline
  $56\%$  &   $\frac{2}{3}$  &  At most $16$  &  At least $600$  &  $1$  &  $\leq 0.001$\\
\hline
 Above $66\%$ &  $\frac{2}{3}$  & -   & -  & $1$  & $\geq 0.75$ \\
\hline
\end{tabular}
\end{table}
\section{Related Work}
\label{sec:ralated work}

To put our current research effort in perspective, we now briefly outline some other efforts in the literature towards improving the scalability and security of permissionless blockchains. Bitcoin-NG \cite{eyal2016bitcoin} was the first attempt to improve the transaction throughput of Bitcoin's consensus protocol \cite{nakamoto2008bitcoin} by employing the concept of \textit{microblocks}.
Due to the significant drawbacks of leader based consensus protocols such as Bitcoin and Bitcoin-NG, the research community's focus shifted to employing committee-based consensus algorithms \cite{bracha1987log} for permissionless blockchain systems. For instance, Decker et al. \cite{decker2016bitcoin} proposed one of the first committee-based consensus protocols, named \emph{PeerCensus}, followed by several other proposals \cite{AbrahamMNRS16,pass2017hybrid,kogias2016enhancing,gilad2017algorand} in a similar direction. The poor scalability of single committee consensus solutions motivated the design of \emph{multiple committee} based blockchain consensus protocols, where the main idea is to split the pending transaction set among multiple shards, which then processes these shards in parallel. \emph{RSCoin} \cite{DanezisM:RSCoin} was proposed as a shard-based blockchain for centrally-banked cryptocurrencies, while \emph{Elastico} \cite{luu2016secure} was the first fully distributed shard-based consensus protocol for public blockchains. Recently proposed shard-based protocols, such as, \emph{Omniledger} \cite{kokoris2017omniledger} and \emph{Rapidchain} \cite{zamanirapidchain} attempt to improve upon the scalability and security guarantees of Elastico, while \emph{PolyShard} \cite{li2018polyshard} proposes to employ techniques from the \emph{coded computing} paradigm \cite{yu2018lagrange} to simultaneously improve transaction throughput, storage efficiency and security. Recently, several novel approaches for shard-based consensus protocols for blockchains have also been proposed \cite{dang2018towards,secure2018zilliqa,poon2017plasma}, and a good review of that various shard-based blockchain protocols can be found in \cite{wang2019sok}.
In the direction of security-related analysis and securing shard-based blockchain protocols, Jusik Yun et al.\cite{yun2019trust} observed that, as the number of validators per-shard in shard-based blockchain protocols is generally smaller than the number of validators in traditional single leader-based protocols, it makes shard-based protocols more vulnerable to 51\% attacks.
They then present a novel Trust-Based Shard Distribution scheme, which distributes the assignment of potential malicious nodes in the network to shards, in order to solve this problem.
The vulnerability of blockchain consensus protocols, including shard-based protocols, to attacks that employ Sybil nodes \cite{douceur2002sybil} is clear and well-documented \cite{sankar2017survey,conti2018survey}. 
In this direction, TrustChain \cite{otte2017trustchain} proposed a novel Sybil-resistant algorithm, called NetFlow, which overcomes Sybil attacks in traditional single leader blockchain architectures by determining and employing node trustworthiness during the consensus process.
In the first mathematical efforts, researchers in \cite{hafid2019new} analyze the security of three shard-based blockchains (\cite{luu2016secure}, \cite{kokoris2018omniledger} and \cite{zamani2018rapidchain}). They computed the upper bound for the probability of increasing the number of malicious nodes for one committee and so for each epoch using tail inequalities for sums of bounded hypergeometric and binomial distributions. But they don't have any simulation to proof their probability results. 

\section{Conclusion}
\label{sec:conclusion}

In this paper, we presented an analytical model to calculate the probability of successful Sybil attack to shard-based permissionless blockchains. As an example, we modeled Elastico and defined two types of Sybil attacked, named BCP (Break Consensus Protocol) and GFT (Generate Fake Transaction). We showed that we can calculate the probability of successful attacks given different system parameters, such as number of shards, the capacity of shards, and the number of nodes in the blockchain network. The results have been verified with numerical simulations. We showed that by carefully design our blockchain network we can avoid Sybil attack. Our results showed that Elastico is not robust against a Sybil attack, with an adversary who has 25\% hash-power of the network. In this case the adversary can break the consensus protocol in at least on shard with probability equal to 0.2.

\balance

\bibliographystyle{IEEEtran}
\bibliography{ref1}

\end{document}